\pgfplotsset{compat=1.18}
\newcommand{\Ftwo}{\mathbb{F}_2}
\newcommand{\Z}{\mathbb{Z}}
\newcommand{\ZP}{\mathbb{Z}_P}
\newcommand{\ZPx}{\mathbb{Z}_P^{\times}}
\newcommand{\Ker}{\mathrm{Ker}}
\newcommand{\idxLtwo}{[L/2]}
\newcommand{\Creq}{\cellcolor{gray!20}1}
\newcommand{\Hful}{\hat{H}}
\newcommand{\Hres}{\tilde{H}}
\DeclareMathOperator{\diag}{diag}
\DeclareMathOperator{\Row}{Row}
\DeclareMathOperator{\wt}{wt}
\renewcommand\[{\begin{equation}}
\renewcommand\]{\end{equation}}
\theoremstyle{definition}
\newtheorem{theorem}{Theorem}[section]
\title{Breaking the Orthogonality Barrier in Quantum LDPC Codes}
\author{Kenta Kasai\\Institute of Science Tokyo\\ \tt{kenta@ict.eng.isct.ac.jp}}
\date{}
\begin{document}
\maketitle

\begin{abstract}
Classical low-density parity-check (LDPC) codes are a widely deployed and well-established technology, forming the backbone of modern communication and storage systems.
It is well known that, in this classical setting, increasing the girth of the Tanner graph while maintaining regular degree distributions leads simultaneously to good belief-propagation (BP) decoding performance and large minimum distance.
In the quantum setting, however, this principle does not directly apply because quantum LDPC codes must satisfy additional orthogonality constraints between their parity-check matrices.
When one enforces both orthogonality and regularity in a straightforward manner, the girth is typically reduced and the minimum distance becomes structurally upper bounded.
In this work, we overcome this limitation by using permutation matrices with controlled commutativity and by restricting the orthogonality constraints to only the active part of the construction, while preserving regular check-matrix structures. This design circumvents conventional structural distance limitations induced by parent-matrix orthogonality, and enables the construction of quantum LDPC codes with large girth while avoiding latent low-weight logical operators.
As a concrete demonstration, we construct a girth-8, (3,12)-regular \([[9216,4612, \leq 48]]\) quantum LDPC code and show that, under BP decoding combined with a low-complexity post-processing algorithm, it achieves a frame error rate as low as  \(10^{-8}\) on the depolarizing channel with error probability \(4 \%\).
\end{abstract}


\section{Introduction}
Classical low-density parity-check (LDPC) codes are a widely deployed and well-established technology in modern communication and storage systems~\cite{Gallager1963}, and BP decoding can achieve channel capacity~\cite{KudekarRichardsonUrbanke2013,RichardsonUrbanke2001}. Performance depends strongly on the Tanner-graph degree distribution: regular codes can perform well, and carefully designed irregular distributions can further improve BP performance~\cite{RichardsonShokrollahiUrbanke2001}, whereas naive irregularization can degrade it. Random LDPC codes with variable-node degree at least 3 have minimum distance growing linearly with blocklength~\cite{RichardsonUrbanke2008,Gallager1963}, so classical LDPC research has not focused on increasing minimum distance itself. Moreover, BP decoding can be trapped by small Tanner-graph structures called trapping sets~\cite{Richardson2003ErrorFloors}, which cause decoding failures; suppressing harmful trapping sets typically requires large girth~\cite{Tanner1981}.

Sparse-graph quantum error-correcting codes were proposed in~\cite{MacKay2004}, and the behavior of iterative decoding and the role of degeneracy were discussed in~\cite{PoulinChung2008}. Families with positive rate and distance $\Theta(\sqrt{n})$ are known~\cite{TillichZemor2014}. Constructions based on Kronecker sum and product~\cite{KovalevPryadko2013}, distance bounds for generalized bicycle (GB) codes~\cite{WangPryadko2022}, finite-length evaluations~\cite{Mostad2025}, and decoding protocols~\cite{Lin2025} have been reported, advancing the theoretical understanding of CSS-type QLDPC codes. However, the orthogonality constraint between the CSS parity-check matrices $H_X$ and $H_Z$, namely $H_X H_Z^{\mathsf T}=0$, has no classical counterpart, and enforcing both orthogonality and regularity in a straightforward manner typically reduces girth and induces structural distance upper bounds.

Surface codes and hypergraph product (HGP) codes rely on geometric or product structures and thus follow design principles different from those developed for classical LDPC codes. As a result, it is not straightforward to apply classical degree-distribution and girth design techniques directly~\cite{Fowler2012,TillichZemor2014,RichardsonUrbanke2008}. The goal of this work is to provide a construction principle that preserves classical LDPC parity-check structures while enabling their direct use as CSS check matrices. Specifically, we aim for regular LDPC codes with large girth and variable degree at least 3, whose minimum distance is not trivially upper bounded.

Prior work has explored cyclic parity-check matrices that allow explicit control of regular Tanner-graph degree distributions. Representative constructions include the Hagiwara--Imai codes~\cite{HagiwaraImai2007} and bicycle constructions~\cite{MacKay2004,KovalevPryadko2013,WangPryadko2022}. However, these constructions have known limitations. Even for classical codes, one-step lifting of a protograph (lifting by circulant permutation matrices (CPMs)) can impose fixed upper bounds on girth due to the protograph structure. Mitchell et al.\ pointed out that one-step lifting can impose fixed bounds on minimum distance and girth and that two-step lifting can improve them~\cite{MitchellSmarandacheCostello2014}. For CSS codes, the parity-check matrices $H_X$ and $H_Z$ are QC, and the Tanner-graph girth satisfies $g=\min\{g(H_X),g(H_Z)\}$, so QC girth bounds apply to each matrix. For quasi-cyclic CSS designs, generalized CSS (entanglement-assisted) constructions use classical QC constructions with girth $\ge 6$ to avoid 4-cycles~\cite{HsiehBrunDevetak2009}. Explicit constructions achieving girth 12 for column weight 2 have also been reported~\cite{KomotoKasai2025}. Moreover, for general quantum CPM--LDPC  codes, the girth is upper bounded by 12 for column weight 2 and by 6 for column weight at least 3~\cite{AmirzadePanarioSadeghi2024}. Using APM--LDPC codes, these upper bounds can be broken~\cite{KomotoKasai2025,KasaiISTC2025}.

In CSS-type LDPC constructions, deleting rows from the parity-check matrices $H_X$ or $H_Z$ preserves the commutation condition $H_X H_Z^{\mathsf T}=0$, but it weakens stabilizer constraints and can enlarge the code space~\cite{MacKay2004,HagiwaraImai2007}. Row deletion is used for rate adjustment and for controlling row and column weights, yet distance preservation is not guaranteed in general~\cite{HagiwaraImai2007,Ostrev2024}. In particular, deleting rows can allow low-weight vectors to enter $C_X\setminus C_Z^{\perp}$ or $C_Z\setminus C_X^{\perp}$, creating new low-weight logical operators and thus reducing the minimum distance. In CSS codes, with $C_X=\ker(H_X)$ and $C_Z=\ker(H_Z)$, low-weight vectors in $C_X\setminus C_Z^{\perp}$ or $C_Z\setminus C_X^{\perp}$ become non-trivial logical operators~\cite{CalderbankShor1996,Steane1996,Gottesman1997}. Consequently, deleted check rows (which are LDPC and thus low weight) can become logical operators and upper bound the distance by the row weight. In the CSS product-code constructions of Ostrev et al., row removal is explicitly used to control the number of stabilizers, and distance degradation can occur~\cite{Ostrev2024}. Lin et al.\ report that removing redundant stabilizer rows reduces the syndrome distance, which is a structural effect due to fewer constraints, not a decoder artifact~\cite{Lin2025}.

In this work, we first discuss the general mechanism by which row deletion can degrade the minimum distance of CSS codes. Next, for generalized Hagiwara--Imai codes~\cite{KomotoKasai2025}, we establish a code-design method that controls commutativity of submatrices to avoid distance degradation due to row deletion. More concretely, using the algebraic structure of APM--LDPC codes, we guarantee orthogonality only on the active part, introduce a set of row pairs for which commutation is required, and localize the commutation condition. We further control the nonzero pattern of the interaction matrix so that deleted rows do not materialize as low-weight logical operators, and present a sequential construction algorithm based on the APM composition rule. We also give a girth upper bound under the required commutativity and explicitly construct a code that attains this bound.

As a concrete demonstration, we construct a (3,12)-regular \([[9216,4612,\leq 48]]\) quantum LDPC code and show that, under BP decoding combined with a low-complexity post-processing algorithm, it achieves a frame error rate (FER) as low as \(10^{-8}\) on the depolarizing channel with error probability \(4\%\). We also compare constructions with and without commutation control to demonstrate suppression of low-weight logical operators.

\section{Problem setting: low-weight logical operators induced by check-row removal}
\label{sec:deleted-checks-logicals}

We abstract the conventional CSS--LDPC construction used in this paper, based on~\cite{MacKay2004,HagiwaraImai2007}. We first construct two orthogonal square (block-circulant) parent matrices $\Hful_X,\Hful_Z$. The CSS code defined by these parent matrices typically has rate zero, so to adjust the rate we delete internal rows and use the remaining rows as the active matrices $H_X$ and $H_Z$. Namely,
\[
\Hful_X=
\begin{bmatrix}
H_X\\
\Hres_X
\end{bmatrix},\qquad
\Hful_Z=
\begin{bmatrix}
H_Z\\
\Hres_Z
\end{bmatrix}
\]
We refer to $H_X$ and $H_Z$ as the active part and $\Hres_X$ and $\Hres_Z$ as the latent part.

However, in this construction, parent orthogonality imposes strong constraints on the latent rows and can degrade the minimum distance. The minimum distance is defined as $d_{\min}=\min\{d_X,d_Z\}$ with
\[
 d_Z:=\min\{\wt(\boldsymbol{z}):\boldsymbol{z}\in C_X\setminus C_Z^{\perp}\},\qquad
 d_X:=\min\{\wt(\boldsymbol{x}):\boldsymbol{x}\in C_Z\setminus C_X^{\perp}\}.
\]
Parent orthogonality
\[
\Hful_X(\Hful_Z)^{\mathsf T}=0
\]
forces not only the active orthogonality between the active matrices $H_X$ and $H_Z$, namely $H_XH_Z^{\mathsf T}=0$, but also
\begin{align}
 H_X(\Hres_Z)^{\mathsf T}=0,\qquad H_Z(\Hres_X)^{\mathsf T}=0\label{021224_10Jan26}
\end{align}
which implies $\Row(\Hres_X)\subset C_Z$ and $\Row(\Hres_Z)\subset C_X$. Thus each row $\boldsymbol{x}$ of $\Hres_X$ satisfies $\boldsymbol{x}\in C_Z$, and each row $\boldsymbol{z}$ of $\Hres_Z$ satisfies $\boldsymbol{z}\in C_X$. In general $\boldsymbol{x}$ need not belong to $C_X^{\perp}$ and $\boldsymbol{z}$ need not belong to $C_Z^{\perp}$, in which case they become logical operators. The minimum row weight (in our construction, $L$) becomes an upper bound on $d_X$ and $d_Z$.

In this work we also obtain the active matrices $H_X,H_Z$ by deleting rows from the parent matrices, but we design them so that low-weight rows in the latent matrices are not orthogonal to $H_Z$ or $H_X$, respectively. That is, for low-weight $\boldsymbol{x}\in\Row(\Hres_X)$ we enforce $H_Z\boldsymbol{x}^{\mathsf T}\neq 0$, and for low-weight $\boldsymbol{z}\in\Row(\Hres_Z)$ we enforce $H_X\boldsymbol{z}^{\mathsf T}\neq 0$, so that in particular
$\Row(\Hres_X)\not\subset C_Z$ and $\Row(\Hres_Z)\not\subset C_X$.
The same argument applies not only to individual latent rows but also to low-weight linear combinations.

Accordingly, we define latent-based distance for the $X$ and $Z$ distances as
\begin{align}
 d_X^{(\mathrm{lat})} &:= \min\{\wt(\boldsymbol{x}) : \boldsymbol{x}\in (C_Z\cap\Row(\Hres_X))\setminus C_X^{\perp}\},\\
 d_Z^{(\mathrm{lat})} &:= \min\{\wt(\boldsymbol{z}) : \boldsymbol{z}\in (C_X\cap\Row(\Hres_Z))\setminus C_Z^{\perp}\}. 
\end{align}
These quantities upper bound $d_{\min}$.
The proposed design aims to make these bounds as large as possible. In~\cite{Kasai2025Hashing}, binary submatrices corresponding to codewords are lifted to a nonbinary field so that they no longer represent codewords; this can be viewed as increasing these latent-based distance via nonbinary lifting.

\section{Method of matrix design}
\label{sec:general-theory}
In this section, based on the above problem setting, we aim to construct the latent matrices $\Hres_X$ and $\Hres_Z$ such that
\begin{align}
 H_XH_Z^{\mathsf T}=0, \qquad H_X(\Hres_Z)^{\mathsf T}\neq 0,\qquad H_Z(\Hres_X)^{\mathsf T}\neq 0,
\end{align}
and, more precisely, such that $d_X^{(\mathrm{lat})}$ and $d_Z^{(\mathrm{lat})}$ are large.
We define generalized Hagiwara--Imai codes, i.e., parent matrix pairs with block-circulant structure, and develop a general theory for imposing only the active orthogonality $H_XH_Z^{\mathsf T}=0$ between the active matrices $H_X$ and $H_Z$.

A protograph LDPC code with column weight $J$ and row weight $L$ is defined by a parity-check matrix consisting of $J\times L$ permutation matrices of size $P$~\cite{Thorpe2003}. For a permutation $f:[P]\to[P]$, the corresponding $P\times P$ permutation matrix $F=P(f)$ is defined by
\(
F_{x,y}=1 \text{ if and only if } f(x)=y.
\)
Each row and column has exactly one 1, so the parent matrix is sparse while its block structure is compact.

We introduce generalized Hagiwara--Imai codes~\cite{KomotoKasai2025} as protograph generalizations of the Hagiwara--Imai quasi-cyclic CSS codes~\cite{HagiwaraImai2007}. For permutation matrices $F_i,G_i\ (i\in[L/2])$ of size $P$, define the parent matrices $\Hful_X,\Hful_Z$ of size $(LP/2)\times(LP)$ by
\[
(\Hful_X)_{i,j}=F_{j-i},\quad
(\Hful_X)_{i,L/2+j}=G_{j-i},
\]
\[
(\Hful_Z)_{i,j}=(G_{i-j})^{\mathsf T},\quad
(\Hful_Z)_{i,L/2+j}=(F_{i-j})^{\mathsf T}.
\]
 Throughout, indices in $\idxLtwo$ are taken modulo $L/2$, i.e., we identify $\idxLtwo$ with $\mathbb{Z}_{L/2}$. A sufficient condition for orthogonality is given below. The active matrices $H_X$ and $H_Z$ consist of the top $J$ block rows. Conventional designs \cite{KomotoKasai2025, Kasai2025Hashing} impose commutativity of $F_i$ and $G_j$ so that both $H_X,H_Z$ and $\Hful_X,\Hful_Z$ are orthogonal.
The top $J$ block rows of the parent matrices are used as the active matrices $H_X$ and $H_Z$. Our interest is to satisfy active orthogonality $H_XH_Z^{\mathsf T}=0$ while not imposing parent-matrix orthogonality.

This focus on active orthogonality is motivated by the desire to keep the latent part non-orthogonal to the active part, so that low-weight latent combinations do not automatically become logical operators. The block-circulant structure makes $\Hful_X(\Hful_Z)^{\mathsf T}$ depend only on differences, and the resulting interaction matrices $\Psi_r$ allow us to localize commutativity constraints to a small difference set.

Each block row of $\Hful_X$ is a cyclic shift of $(F_0,F_1,\dots,F_{L/2-1})$ in the left half and $(G_0,G_1,\dots,G_{L/2-1})$ in the right half. Hence blocks depend only on the difference $(j-i)$, and the product of parent matrices depends only on the difference. For any $i,k\in[L/2]$, the $(i,k)$ block is
\[
\bigl(\Hful_X(\Hful_Z)^{\mathsf T}\bigr)_{i,k}=\sum_{u=0}^{L/2-1}\Bigl(F_uG_{r-u}+G_{r-u}F_u\Bigr)=:\Psi_r
\]
which depends only on $r:=(k-i)\bmod\, L/2$. For $r\in\idxLtwo$, if $F_u$ and $G_{r-u}$ commute for all $u\in\idxLtwo$, then $\Psi_r=0$.

\subsection{Sufficient condition for active orthogonality}
We re-derive a convenient form of a sufficient condition for $H_XH_Z^{\mathsf T}=0$~\cite{Kasai2025Hashing}. Define
\[
\Delta:=\{(k-i)\bmod\, L/2\mid 0\le i,k\le J-1\}\subseteq[L/2].
\]
The next theorem spells out how commutativity restricted to these differences guarantees active orthogonality.
\begin{theorem}
\label{thm:active-orthogonality}
If $F_u$ and $G_{r-u}$ commute for all $r\in\Delta$ and all $u\in[L/2]$, then $H_XH_Z^{\mathsf T}=0$.
\end{theorem}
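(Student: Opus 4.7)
The statement reduces the parent-orthogonality condition $\Hful_X (\Hful_Z)^{\tr} = 0$ to a localized commutativity condition on the index set $\Del$, and asserts that this suffices for active orthogonality. My plan is to read off the block-$(i,k)$ entry of $H_X H_Z^{\tr}$ from the already-derived identity $(\Hful_X (\Hful_Z)^{\tr})_{i,k} = \Ps_r$, and then use the hypothesis to kill each summand of $\Ps_r$ working over $\Ftwo$.

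The first step is to observe that $H_X$ and $H_Z$ are obtained by retaining the top $J$ block rows of $\Hful_X$ and $\Hful_Z$, so the $(i,\cdot)$ and $(k,\cdot)$ block rows of $H_X$ and $H_Z$ coincide with those of the parent matrices for all $i,k \in \{0,1,\dots,J-1\}$. The block product $(H_X H_Z^{\tr})_{i,k}$ is therefore given by the same sum over the left and right halves as in the parent computation, yielding
\[
(H_X H_Z^{\tr})_{i,k} = \Ps_r, \qquad r = (k-i) \bmod L/2.
\]
By the definition of $\Del$, as $(i,k)$ ranges over $\{0,\dots,J-1\}^2$, the difference $r$ takes precisely the elements of $\Del$.

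The second step is to show $\Ps_r = 0$ for every $r \in \Del$. Since arithmetic is over $\Ftwo$, each summand $F_u G_{r-u} + G_{r-u} F_u$ vanishes exactly when $F_u$ and $G_{r-u}$ commute. The hypothesis asserts this commutativity for every $r \in \Del$ and every $u \in \idxLtwo$, so every term in the sum defining $\Ps_r$ is zero, hence $\Ps_r = 0$ for all $r \in \Del$. Combining the two steps, every block of $H_X H_Z^{\tr}$ vanishes, and we conclude $H_X H_Z^{\tr} = 0$.

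I do not anticipate a serious obstacle: the argument is essentially bookkeeping of block indices together with the characteristic-$2$ identity that $AB + BA = 0$ is equivalent to $AB = BA$. The only point requiring care is verifying that the set of differences $r$ appearing in the active product is exactly $\Del$ and not all of $\idxLtwo$; this is precisely what permits the hypothesis to be localized, and it is immediate from the definition of $\Del$.
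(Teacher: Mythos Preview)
Your proposal is correct and follows essentially the same argument as the paper: identify $(H_X H_Z^{\tr})_{i,k}$ with $\Psi_r$ for $r=(k-i)\bmod L/2\in\Delta$, then use the commutativity hypothesis over $\Ftwo$ to conclude $\Psi_r=0$. The paper's proof is just a terser version of what you wrote, with the vanishing of each summand $F_uG_{r-u}+G_{r-u}F_u$ left implicit.
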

\begin{proof}
For any $i,k\in[J]$, we have $r=(k-i)\bmod\, L/2\in\Delta$. By assumption, $\Psi_r=0$, hence for all $i,k\in[J]$,
\(
(\Hful_X(\Hful_Z)^{\mathsf T})_{i,k}=\Psi_{(k-i)\bmod\, L/2}=0.
\)
Therefore $H_XH_Z^{\mathsf T}=0$.
\end{proof}

\subsection{Necessary condition for latent non-orthogonality}
Define the index pairs of $(F_i,G_j)$ required to commute by
\begin{align*}
\Gamma :=\bigcup_{r\in \Delta}\Gamma_r,\qquad \Gamma_r :=\{(i,j)\mid (i,j)=(u,r-u),\ u\in[L/2]\}
\end{align*}
where subscripts are modulo $L/2$. 
Here $\Gamma_r$ is the set of index pairs contributing to $\Psi_r$, and $\Gamma_r\cap\Gamma_s=\varnothing$ for $r\neq s$.

If $J\ge L/2$ then the rate is zero, so we assume $J<L/2$.
We first state a basic feasibility constraint: without enough blocks, latent non-orthogonality cannot be forced.
\begin{theorem}
Assume the active orthogonality $H_XH_Z^{\mathsf T}=0$. For the standard active choice (top $J$ block rows), to have $H_X(\Hres_Z)^{\mathsf T}\neq 0$ and $H_Z(\Hres_X)^{\mathsf T}\neq 0$, it is necessary that $L\ge 4J$.
\end{theorem}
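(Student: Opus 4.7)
\emph{Proof plan.} The plan is to translate both non-orthogonality requirements into the existence of some $r\in\{1,\dots,L/2-1\}$ at which $\Psi_r$ is permitted to be nonzero, and then count arcs on $\Z_{L/2}$ to show that this becomes impossible once $L<4J$.

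First I would determine which $\Psi_r$ controls each cross product. The block identity $(\Hful_X\Hful_Z^{\mathsf T})_{i,k}=\Psi_{(k-i)\bmod L/2}$ says that the $(i,k)$-block of $H_X(\Hres_Z)^{\mathsf T}$, with $i\in[J]$ and $k\in\{J,\dots,L/2-1\}$, equals $\Psi_r$ for $r=(k-i)\bmod L/2$; as $(i,k)$ varies over these ranges, $k-i$ sweeps exactly $\{1,\dots,L/2-1\}$. For the other product I would transpose to get $(\Hful_Z\Hful_X^{\mathsf T})_{i,k}=\Psi_{(i-k)\bmod L/2}^{\mathsf T}$; the realized differences again fill $\{1,\dots,L/2-1\}$, and $\Psi_r^{\mathsf T}\ne 0$ iff $\Psi_r\ne 0$. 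So both conditions reduce to the same demand: some $r\in\{1,\dots,L/2-1\}$ with $\Psi_r\ne 0$. Active orthogonality $H_XH_Z^{\mathsf T}=0$, fed through the same block formula with $i,k\in[J]$, kills $\Psi_r$ for every $r\in\Delta$; so the demand becomes $\{1,\dots,L/2-1\}\not\subseteq\Delta$.

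Next I would work out $\Delta$ and test the covering. Since $k-i$ with $i,k\in[J]$ ranges over $\{-(J-1),\dots,J-1\}$, reduction modulo $L/2$ yields
\[
\Delta=\{0,1,\dots,J-1\}\ \cup\ \{L/2-J+1,\dots,L/2-1\},
\]
two arcs of $J-1$ elements flanking $0$ on the cycle $\Z_{L/2}$. These arcs together cover $\{1,\dots,L/2-1\}$ iff $(J-1)+(J-1)\ge L/2-1$, i.e.\ $L\le 4J-2$; since $L$ is even this is the same as $L<4J$. Contrapositively, either non-orthogonality condition forces $L\ge 4J$. I do not foresee a real obstacle: the whole argument is an index count on $\Z_{L/2}$, and the only mildly delicate step is the transpose computation for $H_Z(\Hres_X)^{\mathsf T}$, which uses the intrinsic symmetry of $\Delta$ (its containing $r$ together with $-r$) to ensure that both cross products are governed by the same arc-covering inequality, so no extra factor appears in the bound.
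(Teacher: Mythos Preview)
Your proposal is correct and follows essentially the same route as the paper: both arguments compute $\Delta$ for the top-$J$ active choice, observe that $L<4J$ forces $\Delta=[L/2]$, and then use active orthogonality to conclude $\Psi_r=0$ for every $r$, which kills all mixed active--latent blocks. Your version is more explicit (you identify which differences $r$ actually occur in the cross products and verify the arc-covering inequality), but the underlying idea is identical.
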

\begin{proof}
Assume $L<4J$. Then for this active choice, $\Delta=[L/2]$. Active orthogonality $H_XH_Z^{\mathsf T}=0$ implies $\Psi_r=0$ for all $r\in\Delta$, hence $\Psi_r=0$ for all $r\in[L/2]$. Therefore every block of $\Hful_X(\Hful_Z)^{\mathsf T}$ vanishes, in particular the mixed blocks between active and latent rows, so $H_X(\Hres_Z)^{\mathsf T}=0$ and $H_Z(\Hres_X)^{\mathsf T}=0$, contradicting the premise.
\end{proof}
Intuitively, when $L<4J$ the difference set $\Delta$ covers all residues, so enforcing active orthogonality forces $\Psi_r=0$ for every $r$. This annihilates all mixed active--latent blocks and makes latent rows orthogonal, which is exactly what we aim to avoid.

We next isolate the minimal algebraic obstruction to latent non-orthogonality.
\begin{theorem}
To have $H_X(\Hres_Z)^{\mathsf T}\neq 0$ and $H_Z(\Hres_X)^{\mathsf T}\neq 0$, it is necessary that there exists $r\in [L/2]\setminus \Delta$ with $\Psi_r\neq 0$.
\end{theorem}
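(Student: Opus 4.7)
The plan is to prove the contrapositive: if $\Psi_r=0$ for every $r\in [L/2]\setminus\Delta$, then both mixed active--latent products vanish, contradicting the hypothesis of the theorem.

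First I would combine the contrapositive assumption with what active orthogonality already gives. Active orthogonality $H_XH_Z^{\mathsf T}=0$ together with the block-circulant identity $(\Hful_X(\Hful_Z)^{\mathsf T})_{i,k}=\Psi_{(k-i)\bmod L/2}$ forces $\Psi_r=0$ for every $r\in\Delta$, since $\Delta$ is by definition the set of differences $(k-i)\bmod L/2$ arising from active block indices $i,k\in[J]$. Adding the contrapositive hypothesis that $\Psi_r=0$ for all $r\in[L/2]\setminus\Delta$ yields $\Psi_r=0$ for \emph{every} $r\in[L/2]$, so every block of $\Hful_X(\Hful_Z)^{\mathsf T}$ vanishes and therefore parent orthogonality $\Hful_X(\Hful_Z)^{\mathsf T}=0$ holds.

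Next I would read off the mixed blocks. Because $\Hful_X$ has $H_X$ as its top $J$ block rows and $\Hres_X$ as its remaining block rows, and similarly for $\Hful_Z$, parent orthogonality implies in particular that the submatrix indexed by active block rows of $\Hful_X$ and latent block rows of $\Hful_Z$ is zero; this submatrix is precisely $H_X(\Hres_Z)^{\mathsf T}$. The symmetric argument, swapping the roles of $X$ and $Z$, gives $H_Z(\Hres_X)^{\mathsf T}=0$. Both conclusions contradict the hypotheses $H_X(\Hres_Z)^{\mathsf T}\neq 0$ and $H_Z(\Hres_X)^{\mathsf T}\neq 0$, completing the contrapositive.

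I do not expect a real obstacle: the argument is purely the contrapositive of the observation already used in the preceding theorem, namely that whenever $\Delta$ exhausts $[L/2]$ all blocks vanish. The only point to be careful about is that a single $r\notin\Delta$ with $\Psi_r\neq 0$ suffices to produce at least one nonzero mixed block in \emph{each} of $H_X(\Hres_Z)^{\mathsf T}$ and $H_Z(\Hres_X)^{\mathsf T}$; this is immediate from the block-circulant indexing, since for such $r$ one can always choose an active $i\in[J]$ and a latent $k\in[L/2]\setminus[J]$ with $(k-i)\bmod L/2=r$, and symmetrically with the roles of the two matrices interchanged.
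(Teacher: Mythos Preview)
Your proof is correct and follows the same contrapositive approach as the paper. In fact your version is more careful on one point: you explicitly invoke active orthogonality $H_XH_Z^{\mathsf T}=0$ to obtain $\Psi_r=0$ for all $r\in\Delta$, and then combine this with the contrapositive hypothesis to conclude $\Psi_r=0$ for every $r\in[L/2]$, whence all mixed blocks vanish. The paper instead asserts that for $i\in[L/2]\setminus[J]$ and $k\in[J]$ the difference $(k-i)\bmod L/2$ never lies in $\Delta$; this is not literally true (for $J=3$, $L/2=6$, take $i=4$, $k=0$, giving $r=2\in\Delta$), and the intended conclusion only follows once active orthogonality is brought in as you do. Your closing remark about the converse direction is unnecessary for the theorem but correct and harmless.
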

\begin{proof}
Assume $\Psi_r=0$ for all $r\in [L/2]\setminus \Delta$. For any $i\in[L/2]\setminus[J]$ and $k\in[J]$, $r=(k-i)\bmod\, L/2$ does not belong to $\Delta$. By assumption $\Psi_r=0$, hence for all such $i,k$,
\(
(\Hful_X(\Hful_Z)^{\mathsf T})_{i,k}=\Psi_{(k-i)\bmod\, L/2}=0.
\)
Therefore $H_X(\Hres_Z)^{\mathsf T}=0$, and similarly $H_Z(\Hres_X)^{\mathsf T}=0$, a contradiction.
If all pairs $(i,j)\in [L/2]^2\setminus \Delta$ commute, then each term $F_uG_{r-u}+G_{r-u}F_u$ vanishes over $\Ftwo$, so $\Psi_r=0$ for all $r\notin\Delta$.
\end{proof}

\subsection{Upper bound on girth}
Finally, we recall a structural limitation that applies to the active matrices when commutativity holds on all required differences.
\begin{theorem}
Let $L$ be even and assume $3\le J\le L/2$ with the standard active choice (top $J$ block rows). If $F_u$ and $G_{r-u}$ commute for all $r\in\Delta$ and all $u\in[L/2]$, then the Tanner graphs of the active matrices $H_X$ and $H_Z$ each contain an 8-cycle.
\end{theorem}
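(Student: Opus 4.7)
The plan is to exhibit an explicit closed walk of length $8$ in the Tanner graph of $H_X$ whose closure is forced by the commutativity hypothesis, and then repeat the argument for $H_Z$. Because $J\ge 3$, the difference set $\Delta$ contains $\{0,\pm 1,\pm 2\}\bmod (L/2)$, so the hypothesis makes every pair $F_u,G_v$ with $u,v\in\{0,1\}$ commute (since $u+v\in\{0,1,2\}\subseteq\Delta$); writing $f_u,g_v$ for the permutations of $[P]$ underlying $F_u,G_v$, each of $f_0,f_1$ commutes with each of $g_0,g_1$ in $S_P$.

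Using row blocks $0,1,2$ (available because $J\ge 3$) and column blocks $F$-col~$1$, $G$-col~$2$, $F$-col~$2$, $G$-col~$1$ (available because $L/2\ge J\ge 3$), I would place four checks $(0,x_1),(1,x_2),(2,x_3),(1,x_4)$ and four variables, one in each chosen column block. Reading off the permutation at each of the eight incident edges yields
\[
x_2=f_0^{-1}f_1(x_1),\quad x_3=g_0^{-1}g_1(x_2),\quad x_4=f_1^{-1}f_0(x_3),\quad x_1=g_1^{-1}g_0(x_4),
\]
so the walk closes exactly when $x_1$ is a fixed point of $\Pi:=g_1^{-1}g_0\,f_1^{-1}f_0\,g_0^{-1}g_1\,f_0^{-1}f_1$. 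By construction all eight factors involve only indices in $\{0,1\}$, so every commutation needed to manipulate $\Pi$ is supplied by the hypothesis.

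The key step is to check that $\Pi=\mathrm{id}$. Using the commutations above, the central block $f_1^{-1}f_0\,g_0^{-1}g_1$ can be rewritten as $g_0^{-1}g_1\,f_1^{-1}f_0$; after this swap, the pure-$g$ factor $(g_1^{-1}g_0)(g_0^{-1}g_1)$ and the pure-$f$ factor $(f_1^{-1}f_0)(f_0^{-1}f_1)$ each telescope to the identity, so $\Pi=\mathrm{id}$ and every $x_1\in[P]$ satisfies the closure equation.

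It remains to verify non-degeneracy and to transfer the construction to $H_Z$. The four variables lie in four distinct column blocks and are automatically distinct; among the checks, only the possible coincidence $x_2=x_4$ would spoil simplicity. Reapplying the commutations one computes $x_4=g_0^{-1}g_1(x_1)$ while $x_2=f_0^{-1}f_1(x_1)$, so either $f_0^{-1}f_1$ and $g_0^{-1}g_1$ disagree at some $x_1$ (in which case that $x_1$ yields a simple $8$-cycle) or they agree identically (in which case the same walk visibly contains a $4$-cycle, a strictly stronger upper bound on the girth). The identical blueprint works for $H_Z$ because its blocks $G_{i-j}^{\tr}$ and $F_{i-j}^{\tr}$ inherit the commutativity under transposition, and the analogous closure permutation telescopes in exactly the same way. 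The main obstacle I anticipate is purely clerical: tracking the eight permutation factors in $\Pi$ and the four position indices along the walk while confirming that no unintended coincidences arise.
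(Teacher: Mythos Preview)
Your argument is correct and follows essentially the same route as the paper: both exhibit an explicit length-$8$ closed walk using row blocks $0,1,2$ (with block~$1$ visited twice) and two adjacent $F$-columns together with two adjacent $G$-columns, then verify that the cycle word collapses to the identity using only commutations $F_uG_v=G_vF_u$ with $u+v\in\{0,\pm1,\pm2\}\subseteq\Delta$. Your treatment is in fact slightly more careful than the paper's, since you also handle the possible degeneracy $x_2=x_4$ by observing that it would force a $4$-cycle.
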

\begin{proof}
Since $J\ge 3$, the active block rows include $0,1,2$. Choose $i,j\in[L/2]$ so that $i+j,i+j-1,i+j-2\in\Delta$ (e.g., $i=j=0$). Consider the block positions on the active rows
\[
[(0,i),(0,L/2+j),(1,L/2+j),(1,i+1),(2,i+1),(2,L/2+j+1),(1,L/2+j+1),(1,i)],
\]
where indices are modulo $L/2$. The corresponding cycle product, i.e., the product of permutation matrices encountered along this closed walk, is
\[
W=F_iG_j^{-1}G_{j-1}F_i^{-1}F_{i-1}G_{j-1}^{-1}G_jF_{i-1}^{-1}
\label{eq:8cycle-word-general}
\]
and the pairs $(i,j)$, $(i,j-1)$, $(i-1,j)$, and $(i-1,j-1)$ correspond to the differences $i+j$, $i+j-1$, and $i+j-2$ in $\Delta$. By assumption the relevant $F$ and $G$ commute, so $W=I$ and an 8-cycle exists in the active Tanner graphs.
\end{proof}

\section{Method of Matrix Construction}
\label{sec:sequential-apm}
Building on the commutation conditions and active orthogonality derived in the previous section, we now present a concrete sequential procedure to construct permutation blocks that satisfy those constraints while avoiding short cycles.
For a given $(L/2,P,J)$, we construct $\{F_i\}$ and $\{G_i\}$ that satisfy the following simultaneously:
\begin{itemize}
\item[(1)] $F_i$ and $G_j$ commute for $(i,j)\in \Gamma$.
\item[(2)] At least one $(i,j)\in [L/2]^2\setminus \Gamma$ is non-commuting.
\item[(3)] Avoid short cycles in the active matrices $H_X$ and $H_Z$.
\end{itemize}

The framework of constructing quasi-cyclic LDPC blocks from circulant permutation matrices (CPMs) was systematized by Fossorier~\cite{Fossorier2004}. We adopt affine permutation matrices (APMs): the APM--LDPC framework extends CPMs, and its algebraic form makes commutativity control straightforward via congruence conditions~\cite{Gholami2015}. In the classical setting, Yoshida and Kasai reported that linear permutation polynomial (APM-based) codes achieve performance comparable to protograph LDPC codes~\cite{YoshidaKasai2019}, while no fixed girth upper bound has been reported for APM-based constructions, unlike one-step CPM lifting. Methods that combine APM--LDPC codes to extend length and girth have also been proposed~\cite{myung2006combining}. From a group-theoretic viewpoint, APMs on $\ZP$ form the affine group $\mathrm{AGL}(1,\ZP)=\ZP\rtimes\ZPx$, a semidirect product of translations by the unit group; commutativity is therefore governed by the interaction between the linear and translation parts.

Consider an affine permutation (AP) on $\ZP$,
\begin{equation*}
  f(x)=ax+b,\qquad a\in\ZPx,\ b\in\ZP.
\end{equation*}
We set $f_i(x)=a_i x+b_i$ and $g_j(x)=c_j x+d_j$, and denote the corresponding permutation matrices by $F_i:=P(f_i)$ and $G_j:=P(g_j)$.

Under this representation, commutativity of APs can be checked by a quadratic congruence:
\[
  f_i g_j = g_j f_i
  \iff
  d_j(a_i-1) - b_i(c_j-1) \equiv 0 \pmod P,
  \tag{C(i,j)}
\]
See~\cite{Gholami2015} for a derivation. The condition involves products of $a_i,c_j$ and $b_i,d_j$~\cite{Gholami2015}. The commutation table can thus be expressed as a system of congruences; in particular, if $a_i,c_j$ are chosen first, the commutation constraints reduce to linear congruences in $b_i,d_j$, enabling a consistent search~\cite{Gholami2015,myung2006combining}.

We sequentially select candidates that satisfy both the commutation table and short-cycle conditions, using backtracking when a candidate fails. The number of trials is adjusted dynamically based on recent success rates. We interpret each candidate generator as an arm in a multi-armed bandit and allocate trials accordingly~\cite{Auer2002}.

Short-cycle detection reduces to checking fixed points of the composite map $\Sigma$ along a block-cycle pattern~\cite{Fossorier2004,Gholami2015}. The AP composition is again AP, $\Sigma(x)=Ax+B$, so
\begin{equation*}
  \Sigma(x)=x\ \Longleftrightarrow\ \gcd(A-1,P)\mid B
\end{equation*}
provides a test without enumerating the Tanner graph~\cite{Gholami2015}.

\section{Method of Decoding}
Our experiments use BP decoding~\cite{MacKay2004,PoulinChung2008} as the baseline.
We first recall a basic principle: if we knew the support where the BP estimate differs from the true error, decoding would reduce to a linear system on the corresponding columns, and a unique solution would imply successful recovery. We write $H[A,B]$ for the submatrix with row set $A$ and column set $B$, and for a vector $\boldsymbol{v}$ we write $\boldsymbol{v}[A]$ for its restriction to $A$. For example, if the $X$-side mismatch support is $E$, then with $A$ the full check set, $H_Z[A,E]\,\boldsymbol{e}[E]=\boldsymbol{r}_X[A]$; if $H_Z[A,E]$ has full column rank, the solution is unique and yields the correct correction. The same applies to the $Z$ side with $H_X$ and $\boldsymbol{r}_Z$.

Let the BP estimate at each iteration be $\boldsymbol{\hat{x}},\boldsymbol{\hat{z}}$. Define the residual syndromes
\begin{equation*}
\boldsymbol{r}_X:=\boldsymbol{s}_X\oplus H_Z\boldsymbol{\hat{x}},\qquad
\boldsymbol{r}_Z:=\boldsymbol{s}_Z\oplus H_X\boldsymbol{\hat{z}}.
\end{equation*}
Empirically, BP stalls or slowly oscillates roughly once per $10^5$ trials (frames), and we trigger post-processing in such cases, focusing on instances with a small number of unsatisfied checks; in our experiments we activate it only when the number of unsatisfied checks is at most 20.

We invoke three post-processing methods: ETS-based post-processing, a correction based on flip history, and an OSD-based method. Below we briefly describe the procedure and the criteria. In post-processing, the residual syndrome $\boldsymbol{r}$ is $\boldsymbol{r}_X$ (with $H=H_Z$) or $\boldsymbol{r}_Z$ (with $H=H_X$), depending on the side.

Because the true mismatch support $E$ is not directly observable by the decoder, our approach is to find a candidate support $\hat{E}\supseteq E$ and solve the local system on its neighborhood; when the system defined by $H[N(\hat{E}),\hat{E}]$ and $\boldsymbol{r}[N(\hat{E})]$ has a unique solution $\boldsymbol{\delta}[\hat{E}]$, we adopt it as the mismatch correction. 
The correction is applied only when the solution is unique and its weight is below a threshold.
Below we summarize the heuristics used to estimate $\hat{E}$.

\subsection{Flip history decoding}
Flip history decoding (FHD) uses the flip history to form a candidate support $\hat{E}=F$, where $F$ is the set of variables that flipped during BP, and then applies the local linear solve on $N(F)$. 
This local correction is a heuristic to resolve stalls caused by trapping structures~\cite{Richardson2003ErrorFloors}. Related local-flip decoders include weighted bit-flipping~\cite{ZhangFossorier2004}. 

\subsection{Ordered statistics decoding}
Ordered statistics decoding (OSD)~\cite{Dumer1998} uses the least reliable variables from BP to form a candidate support $\hat{E}=K$. 
 In the implementation, we perform a binary search on $|K|$ to find the minimum $K$ that is solvable, then take the largest $K$ within the range that preserves uniqueness. 

\subsection{Trapping set based post-processing}
An $(a,b)$ elementary trapping set  is a variable-node set $V$ with $|V|=a$ whose induced subgraph has exactly $b$ unsatisfied (odd-degree) check nodes; it is elementary if every check node in the induced subgraph has degree 1 or 2.
The post-processing targets elementary trapping sets with $b=2$ unsatisfied checks, which are among the most harmful structures that cause BP to stall. 
We precompute a library of harmful trapping sets from the Tanner graph and store each entry as a variable set $V$ (used as $\hat{E}$) and its odd check pair $(c_0,c_1)$. 
During decoding, we run the post-processing only when the residual syndrome has exactly two unsatisfied checks.

\section{Results}
\label{sec:example}

We instantiate the general theory of Section~\ref{sec:general-theory} and the sequential construction of Section~\ref{sec:sequential-apm} for the smallest case we were able to construct, $J=3,L=12,P=768$. We first give explicit $\Delta,\Gamma$ and $\Psi_r$, clarify the active orthogonality condition for the active matrices $H_X$ and $H_Z$, then design the commutation table and APM parameters, and finally evaluate distance bounds and FER.

\subsection{Structure and active orthogonality}
For $J=3, L/2=6$, each block row is a cyclic shift of the previous row, so $\Hful_X,\Hful_Z$ have a $6\times 12$ block-circulant structure:
\begin{align}
 \begin{aligned} \Hful_X= & \left(\begin{array}{llllll|llllll}
 F_0 & F_1 & F_2 & F_3 & F_4 & F_5 & G_0 & G_1 & G_2 & G_3 & G_4 & G_5 \\
 F_5 & F_0 & F_1 & F_2 & F_3 & F_4 & G_5 & G_0 & G_1 & G_2 & G_3 & G_4 \\
 F_4 & F_5 & F_0 & F_1 & F_2 & F_3 & G_4 & G_5 & G_0 & G_1 & G_2 & G_3 \\\hline 
 F_3 & F_4 & F_5 & F_0 & F_1 & F_2 & G_3 & G_4 & G_5 & G_0 & G_1 & G_2 \\
 F_2 & F_3 & F_4 & F_5 & F_0 & F_1 & G_2 & G_3 & G_4 & G_5 & G_0 & G_1 \\
 F_1 & F_2 & F_3 & F_4 & F_5 & F_0 & G_1 & G_2 & G_3 & G_4 & G_5 & G_0\end{array}\right) .\\
\Hful_Z= & \left(\begin{array}{llllll|llllll}
 G_0^{\prime} & G_5^{\prime} & G_4^{\prime} & G_3^{\prime} & G_2^{\prime} & G_1^{\prime} & F_0^{\prime} & F_5^{\prime} & F_4^{\prime} & F_3^{\prime} & F_2^{\prime} & F_1^{\prime} \\
 G_1^{\prime} & G_0^{\prime} & G_5^{\prime} & G_4^{\prime} & G_3^{\prime} & G_2^{\prime} & F_1^{\prime} & F_0^{\prime} & F_5^{\prime} & F_4^{\prime} & F_3^{\prime} & F_2^{\prime} \\
 G_2^{\prime} & G_1^{\prime} & G_0^{\prime} & G_5^{\prime} & G_4^{\prime} & G_3^{\prime} & F_2^{\prime} & F_1^{\prime} & F_0^{\prime} & F_5^{\prime} & F_4^{\prime} & F_3^{\prime} \\\hline
 G_3^{\prime} & G_2^{\prime} & G_1^{\prime} & G_0^{\prime} & G_5^{\prime} & G_4^{\prime} & F_3^{\prime} & F_2^{\prime} & F_1^{\prime} & F_0^{\prime} & F_5^{\prime} & F_4^{\prime} \\
 G_4^{\prime} & G_3^{\prime} & G_2^{\prime} & G_1^{\prime} & G_0^{\prime} & G_5^{\prime} & F_4^{\prime} & F_3^{\prime} & F_2^{\prime} & F_1^{\prime} & F_0^{\prime} & F_5^{\prime} \\
 G_5^{\prime} & G_4^{\prime} & G_3^{\prime} & G_2^{\prime} & G_1^{\prime} & G_0^{\prime} & F_5^{\prime} & F_4^{\prime} & F_3^{\prime} & F_2^{\prime} & F_1^{\prime} & F_0^{\prime}
\end{array}\right) .\end{aligned}
\end{align}
Here, for simplicity, $A'$ denotes the transpose of $A$.
For example, the $(0,1)$ block of $\Hful_X(\Hful_Z)^{\mathsf T}$ is
\begin{equation*}
\begin{aligned}
(\Hful_X(\Hful_Z)^{\mathsf T})_{0,1}
&= F_0G_1+F_1G_0+F_2G_5+F_3G_4+F_4G_3+F_5G_2 \\
&\quad +G_0F_1+G_1F_0+G_2F_5+G_3F_4+G_4F_3+G_5F_2 \\
&= \Psi_1.
\end{aligned}
\end{equation*}
With $\Delta=\{0,1,2,4,5\}$,
\[
\Gamma=\{(i,j)\in[L/2]^2\mid i+j\not\equiv 3\ (\bmod\, 6)\}
=[L/2]^2\setminus\{(0,3),(1,2),(2,1),(3,0),(4,5),(5,4)\}.
\]
For the active matrices $H_X$ and $H_Z$ with $J=3$,
\begin{equation*}
H_XH_Z^{\mathsf T}
=
\bigl(\Psi_{(k-i)\bmod\, 6}\bigr)_{0\le i,k\le 2}
=
\begin{pmatrix}
\Psi_0 & \Psi_1 & \Psi_2\\
\Psi_5 & \Psi_0 & \Psi_1\\
\Psi_4 & \Psi_5 & \Psi_0
\end{pmatrix}.
\end{equation*}
Here we set $r:=(k-i)\bmod\,6$. For the active part we have $i,k\in\{0,1,2\}$, hence
$r\in\Delta=\{0,1,2,4,5\}$.
Thus active orthogonality is equivalent to
\begin{equation*}
\Psi_r=0\qquad(r\in\{0,1,2,4,5\}=\Delta),
\end{equation*}
so the $r=3$ interaction can be kept free. The parent-matrix product $\Hful_X(\Hful_Z)^{\mathsf T}$ is
\begin{equation*}
\Hful_X(\Hful_Z)^{\mathsf T}
=
\bigl(\Psi_{(k-i)\bmod\, 6}\bigr)_{0\le i,k\le 5}
=
\left(
\begin{array}{lll|lll}
\Psi_0 & \Psi_1 & \Psi_2 & \Psi_3 & \Psi_4 & \Psi_5\\
\Psi_5 & \Psi_0 & \Psi_1 & \Psi_2 & \Psi_3 & \Psi_4\\
\Psi_4 & \Psi_5 & \Psi_0 & \Psi_1 & \Psi_2 & \Psi_3\\ \hline
\Psi_3 & \Psi_4 & \Psi_5 & \Psi_0 & \Psi_1 & \Psi_2\\
\Psi_2 & \Psi_3 & \Psi_4 & \Psi_5 & \Psi_0 & \Psi_1\\
\Psi_1 & \Psi_2 & \Psi_3 & \Psi_4 & \Psi_5 & \Psi_0
\end{array}
\right).
\end{equation*}

\subsection{Design goal and commutation table}
With $L/2=6$ and $J=3$, we have $\Delta=\{0,1,2,4,5\}$, so that $3\notin\Delta$.
Our design guideline is to enlarge $d^{(\mathrm{lat})}$ as a proxy for increasing $d_{\min}$. In this setting, we target
\[
\Psi_r=0\quad (r\in\Delta), \qquad \Psi_3\neq 0.
\]
This condition guarantees active orthogonality while deliberately relaxing parent orthogonality,
thereby creating room to prevent the latent part from becoming orthogonal to the active part.

As an illustrative commutation table, suppose that all pairs $(F_i,G_j)$ commute except
$(i,j)=(0,3)$ and $(1,2)$.
The table below indicates the commuting pairs in $\Gamma$ by~$\Creq$:
\[
\begin{array}{c|cccccc}
 & G_0 & G_1 & G_2 & G_3 & G_4 & G_5\\ \hline
F_0 & \Creq & \Creq & \Creq & 0 & \Creq & \Creq\\
F_1 & \Creq & \Creq & 0 & \Creq & \Creq & \Creq\\
F_2 & \Creq & \Creq & \Creq & \Creq & \Creq & \Creq\\
F_3 & \Creq & \Creq & \Creq & \Creq & \Creq & \Creq\\
F_4 & \Creq & \Creq & \Creq & \Creq & \Creq & \Creq\\
F_5 & \Creq & \Creq & \Creq & \Creq & \Creq & \Creq
\end{array}
\]
Under this commutation pattern, we obtain
\[
\Psi_r=0\quad (r\neq 3), \qquad
\Psi_3=F_0G_3+G_3F_0+F_1G_2+G_2F_1.
\]

At first glance, one might expect that it is sufficient to keep $\Psi_3$ nonzero by making only
$F_0$ and $G_3$ non-commuting.
Indeed, we initially considered this minimal design and constructed codes satisfying $\Psi_3\neq 0$.
However, this choice resulted in many $(a,b=2)$ elementary trapping sets  formed by connecting multiple
length-8 cycles, which are among the most harmful structures for BP decoding~\cite{Richardson2003ErrorFloors}.
Moreover, two trapping sets sharing these same check-node pairs gave rise to low-weight logical operators.
Representative examples are shown in Fig.~\ref{fig:ets-8cycles}.
In contrast, by allowing both $(F_0,G_3)$ and $(F_1,G_2)$ to be non-commuting, we introduce additional
degrees of freedom in $\Psi_3$, which substantially reduces the number of harmful trapping sets while still
satisfying the active orthogonality constraint.
In example codes constructed this way, the only small trapping sets of weight $a$ are those listed in Section~\ref{sec:ets-library},
and none of them produced low-weight logical operators.

\paragraph{Remark.}
Throughout Section~\ref{sec:general-theory} we considered the standard active choice consisting of the top $J$ block
rows.
More generally, let $S\subseteq\mathbb{Z}_{L/2}$ be an arbitrary active index set with $|S|=J$,
and define
\[
\Delta_S := \{(k-i)\bmod (L/2) \mid i,k\in S\}.
\]
The proof of Theorem~\ref{thm:active-orthogonality} applies verbatim with $\Delta$ replaced by $\Delta_S$.

This generalization allows a substantial reduction of $|\Delta|$.
In particular, for $J=3$ and $L=12$, choosing $S=\{0,2,4\}$ gives $|\Delta_S|=3$,
whereas the standard choice yields $|\Delta|=5$.
Such a reduction increases the number of indices $r\notin\Delta_S$ for which $\Psi_r$
can be kept nonzero, and may be exploited to further control trapping-set structures.
Under this relaxed constraint, constructing codes may allow us to achieve the girth and minimum distance studied here at shorter codelengths.

\subsection{Affine permutation construction and choice of $P$}
We construct a girth-8 $(3,12)$-regular code; the parameters $f_i,g_i$ are listed in Table~\ref{tab:apm-fg-params} (indices start at 0). Each $f_i,g_i$ is an affine permutation $x\mapsto ax+b$ on $\ZP$ with $\gcd(a,P)=1$. The resulting Tanner graph has girth 8 (no 4- or 6-cycles). 
\begin{table}[t]
 \centering
\caption{Constructed APM parameters ($P=768,J=3,L=12$, girth=8)}
\label{tab:apm-fg-params}
\renewcommand{\arraystretch}{1.1}
\small
\begin{tabular}{c|l|l}
\hline
$i$ & $f_i(x)$ & $g_i(x)$ \\
\hline
0 & $763x+435$ & $289x+496$ \\
1 & $679x+69$ & $257x+640$ \\
2 & $397x+330$ & $625x+200$ \\
3 & $61x+18$ & $41x+524$ \\
4 & $697x+612$ & $193x+672$ \\
5 & $373x+246$ & $449x+672$ \\
\hline
\end{tabular}
\end{table}
If $P$ is a prime power, there is no set of affine permutations $\sigma_0,\sigma_1,\tau_0,\tau_1$ on $[P]$ that simultaneously satisfy: $\sigma_0$ commutes with $\tau_0$ but not with $\tau_1$, and $\sigma_1$ does not commute with $\tau_0$ but commutes with $\tau_1$. Using the standard semidirect-product representation of $\mathrm{AGL}_1(\mathbb{Z}/p^k\mathbb{Z})$, $(a,b):x\mapsto ax+b$, and the quadratic commutation condition, one can show this by analyzing the resulting zero-product conditions with $p$-adic valuation, which contradict non-commutativity. Therefore we choose $P=768=3\times 2^8$.

\subsection{Short Cycles}
We begin by presenting a concrete example illustrating Theorem~3.4.
For the $J=3,L=12$, the block positions
\[
 [(0,0),(0,6),(1,6),(1,1),(2,1),(2,7),(1,7),(1,0)]
\label{eq:8cycle-example}
\]
form an 8-cycle when traversed in order. The corresponding cycle word, in the form of Eq.~\eqref{eq:8cycle-word-general}, is
\begin{equation*}
W=F_0G_0^{-1}G_5F_0^{-1}F_5G_5^{-1}G_0F_5^{-1}.
\end{equation*}
Counting such active block 8-cycles is also straightforward. For the active rows $0,1,2$, the unavoidable pattern uses adjacent left columns $i,i+1$ and adjacent right columns $j,j+1$ (indices modulo $L/2$), and it occurs when $i+j,i+j-1,i+j-2\in\Delta$. Thus the number of block 8-cycles in the active matrices is $(L/2)|\Delta\cap(\Delta+1)\cap(\Delta+2)|$. For $J=3,L/2=6$, $\Delta=\{0,1,2,4,5\}$ and $\Delta\cap(\Delta+1)\cap(\Delta+2)=\{0,1,2\}$, so there are 18 block 8-cycles, yielding $18P$ lifted 8-cycles.

Enumerating cycles in the active Tanner graphs of this instance gives $60{,}512$ 8-cycles on the $X$ side, and $54{,}656$ 8-cycles on the $Z$ side. By Theorem~3.4, 8-cycles are unavoidable in the active Tanner graphs; the block pattern above yields the lower bound $18P=13{,}824$. The total exceeds this bound because commutativity holds for many additional pairs and the specific APM parameters create extra cycles beyond block lifts, so the total need not be a multiple of $P$.

\subsection{Trapping set library construction}
\label{sec:ets-library}
In our initial experiments,  trapping sets of the form shown in Fig.~\ref{fig:ets-8cycles} were the dominant cause of BP stalls. Since the constructed code has girth 8, we first enumerate 8-cycles and then build an trapping set library by connecting several 8-cycles to generate trapping setes isomorphic to these patterns. The library is not a collection of only the trapping sets encountered in decoding; instead, we enumerate all trapping sets isomorphic to the dominant patterns and add them exhaustively.

Figure~\ref{fig:ets-8cycles} shows three elementary trapping-set patterns: $(6,2)$ from two length-8 cycles,  $(12,2)$ from five length-8 cycles, and  $(8,2)$ formed by attaching a length-4 check--variable path to the two odd checks of (6,2). For $P=768,J=3,L=12$, the counts were: $(6,2)$ X-side: 48, Z-side: 16 (total 64); $(12,2)$ X-side: 23, Z-side: 0 (total 23); and $(8,2)$ path4 X-side: 48, Z-side: 0 (total 48).

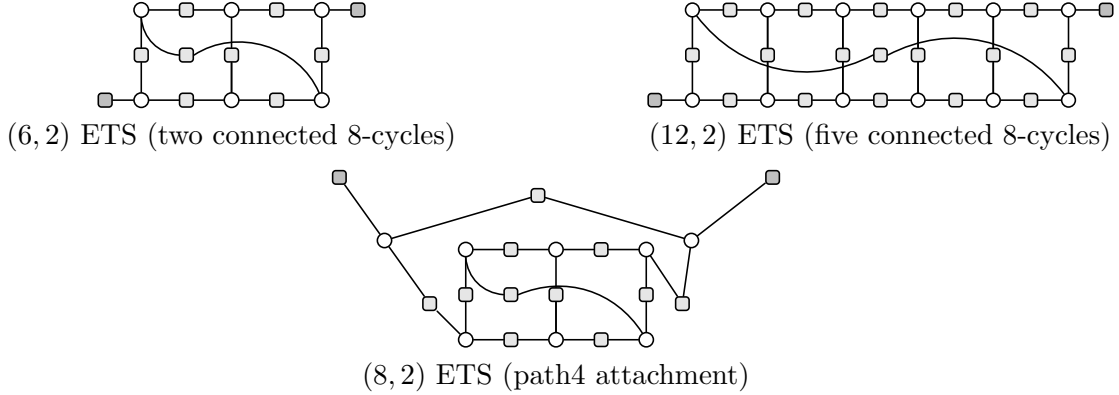
\begin{figure}[t]
\centering
\begin{minipage}{0.48\linewidth}
\centering
\begin{tikzpicture}[
  x=0.6cm,y=0.6cm,
  var/.style={circle,draw=black,fill=white,line width=0.6pt,minimum size=5.5pt,inner sep=0pt},
  chk/.style={rectangle,draw=black,fill=gray!20,line width=0.6pt,minimum size=5.2pt,inner sep=0pt,rounded corners=1.2pt},
  odd/.style={rectangle,draw=black,fill=gray!50,line width=0.6pt,minimum size=5.2pt,inner sep=0pt,rounded corners=1.2pt},
  edge/.style={line width=0.6pt,draw=black,line cap=round,line join=round}
]
\node[var] (v1) at (0,0) {};
\node[var] (v2) at (0,2) {};
\node[var] (v3) at (2,2) {};
\node[var] (v4) at (2,0) {};
\node[var] (v5) at (4,2) {};
\node[var] (v6) at (4,0) {};

\node[chk] (c1) at (1,2) {};
\node[chk] (c2) at (2,1) {};
\node[chk] (c3) at (1,0) {};
\node[chk] (c4) at (0,1) {};
\node[chk] (c5) at (3,2) {};
\node[chk] (c6) at (4,1) {};
\node[chk] (c7) at (3,0) {};
\node[chk] (c8) at (1.0,1.0) {};

\node[odd] (o1) at (-0.8,0) {};
\node[odd] (o2) at (4.8,2) {};

\draw[edge] (v2)--(c1)--(v3)--(c2)--(v4)--(c3)--(v1)--(c4)--(v2);
\draw[edge] (v3)--(c5)--(v5)--(c6)--(v6)--(c7)--(v4)--(c2);
\draw[edge] (v1)--(o1);
\draw[edge] (v5)--(o2);
\draw[edge] (v2) to[bend right=40] (c8.west);
\draw[edge] (c8.east) to[bend left=50] (v6);
\end{tikzpicture}
\par
 $(6,2)$ ETS (two connected 8-cycles)
\end{minipage}
\hfill
\begin{minipage}{0.48\linewidth}
\centering
\begin{tikzpicture}[
  x=0.5cm,y=0.6cm,
  var/.style={circle,draw=black,fill=white,line width=0.6pt,minimum size=5.2pt,inner sep=0pt},
  chk/.style={rectangle,draw=black,fill=gray!20,line width=0.6pt,minimum size=5.0pt,inner sep=0pt,rounded corners=1.2pt},
  odd/.style={rectangle,draw=black,fill=gray!50,line width=0.6pt,minimum size=5.0pt,inner sep=0pt,rounded corners=1.2pt},
  edge/.style={line width=0.6pt,draw=black,line cap=round,line join=round}
]
\node[var] (v1) at (0,2) {};
\node[var] (v2) at (0,0) {};
\node[var] (v3) at (2,2) {};
\node[var] (v4) at (2,0) {};
\node[var] (v5) at (4,2) {};
\node[var] (v6) at (4,0) {};
\node[var] (v7) at (6,2) {};
\node[var] (v8) at (6,0) {};
\node[var] (v9) at (8,2) {};
\node[var] (v10) at (8,0) {};
\node[var] (v11) at (10,2) {};
\node[var] (v12) at (10,0) {};

\node[chk] (u0) at (0,1) {};
\node[chk] (u1) at (2,1) {};
\node[chk] (u2) at (4,1) {};
\node[chk] (u3) at (6,1) {};
\node[chk] (u4) at (8,1) {};
\node[chk] (u5) at (10,1) {};

\node[chk] (t1) at (1,2) {};
\node[chk] (t2) at (3,2) {};
\node[chk] (t3) at (5,2) {};
\node[chk] (t4) at (7,2) {};
\node[chk] (t5) at (9,2) {};
\node[chk] (b1) at (1,0) {};
\node[chk] (b2) at (3,0) {};
\node[chk] (b3) at (5,0) {};
\node[chk] (b4) at (7,0) {};
\node[chk] (b5) at (9,0) {};
\node[chk] (u6) at (5,1.0) {};

\node[odd] (o3) at (-1,0) {};
\node[odd] (o4) at (11,2) {};

\draw[edge] (v1)--(t1)--(v3)--(u1)--(v4)--(b1)--(v2)--(u0)--(v1);
\draw[edge] (v3)--(t2)--(v5)--(u2)--(v6)--(b2)--(v4)--(u1)--(v3);
\draw[edge] (v5)--(t3)--(v7)--(u3)--(v8)--(b3)--(v6)--(u2)--(v5);
\draw[edge] (v7)--(t4)--(v9)--(u4)--(v10)--(b4)--(v8)--(u3)--(v7);
\draw[edge] (v9)--(t5)--(v11)--(u5)--(v12)--(b5)--(v10)--(u4)--(v9);
\draw[edge] (v2)--(o3);
\draw[edge] (v11)--(o4);
\draw[edge] (v1) to[bend right=40] (u6.west);
\draw[edge] (u6.east) to[bend left=40] (v12);
\end{tikzpicture}
\par
 $(12,2)$ ETS (five connected 8-cycles)
\end{minipage}
\par\medskip
\begin{minipage}{0.70\linewidth}
\centering
\begin{tikzpicture}[
  x=0.6cm,y=0.6cm,
  var/.style={circle,draw=black,fill=white,line width=0.6pt,minimum size=5.4pt,inner sep=0pt},
  chk/.style={rectangle,draw=black,fill=gray!20,line width=0.6pt,minimum size=5.1pt,inner sep=0pt,rounded corners=1.2pt},
  odd/.style={rectangle,draw=black,fill=gray!50,line width=0.6pt,minimum size=5.1pt,inner sep=0pt,rounded corners=1.2pt},
  edge/.style={line width=0.6pt,draw=black,line cap=round,line join=round}
]
\node[var] (v1) at (0,0) {};
\node[var] (v2) at (0,2) {};
\node[var] (v3) at (2,2) {};
\node[var] (v4) at (2,0) {};
\node[var] (v5) at (4,2) {};
\node[var] (v6) at (4,0) {};

\node[chk] (c1) at (1,2) {};
\node[chk] (c2) at (2,1) {};
\node[chk] (c3) at (1,0) {};
\node[chk] (c4) at (0,1) {};
\node[chk] (c5) at (3,2) {};
\node[chk] (c6) at (4,1) {};
\node[chk] (c7) at (3,0) {};
\node[chk] (c8) at (1.0,1.0) {};

\node[chk] (c599) at (-0.8,0.8) {};
\node[chk] (c542) at (4.8,0.8) {};

\node[var] (v8) at (-1.8,2.2) {};
\node[chk] (c893) at (1.6,3.2) {};
\node[var] (v7) at (5.0,2.2) {};

\node[odd] (o1715) at (-2.8,3.6) {};
\node[odd] (o1991) at (6.8,3.6) {};

\draw[edge] (v2)--(c1)--(v3)--(c2)--(v4)--(c3)--(v1)--(c4)--(v2);
\draw[edge] (v3)--(c5)--(v5)--(c6)--(v6)--(c7)--(v4)--(c2);
\draw[edge] (v1)--(c599);
\draw[edge] (v5)--(c542);
\draw[edge] (v2) to[bend right=40] (c8.west);
\draw[edge] (c8.east) to[bend left=40] (v6);

\draw[edge] (c542)--(v7)--(c893)--(v8)--(c599);
\draw[edge] (v8)--(o1715);
\draw[edge] (v7)--(o1991);
\end{tikzpicture}
\par
 $(8,2)$ ETS (path4 attachment)
\end{minipage}
\caption{Representative ETSs used in the library. }
\label{fig:ets-8cycles}
\end{figure}

\subsection{Minimum-distance evaluation}
Recall the latent-based distances $d_X^{(\mathrm{lat})}$ and $d_Z^{(\mathrm{lat})}$ defined in Section~\ref{sec:deleted-checks-logicals}. We now construct explicit low-weight vectors in the latent row spaces that satisfy the active parity checks but are not in the duals. Equivalently, any $\boldsymbol{x}\in\Row(\Hres_X)$ can be written as $\boldsymbol{x}=(\Hres_X)^{\mathsf T}\boldsymbol{u}$ for some $\boldsymbol{u}$, and likewise for $Z$.
Concretely, we look for coefficient vectors $\boldsymbol{u}\in\Ker\!\bigl(H_Z(\Hres_X)^{\mathsf T}\bigr)$ and $\boldsymbol{v}\in\Ker\!\bigl(H_X(\Hres_Z)^{\mathsf T}\bigr)$, then lift them to $\boldsymbol{x}=\boldsymbol{u}^{\mathsf T}\Hres_X$ and $\boldsymbol{z}=\boldsymbol{v}^{\mathsf T}\Hres_Z$. Any such $\boldsymbol{x},\boldsymbol{z}$ yields $d_X^{(\mathrm{lat})}\le \wt(\boldsymbol{x})$ and $d_Z^{(\mathrm{lat})}\le \wt(\boldsymbol{z})$, so below we exploit the block-diagonal form to obtain explicit weight-48 examples.

Let $\boldsymbol{u},\boldsymbol{v}\in\Ftwo^{3P}$ be coefficient vectors satisfying
$\diag(\Psi_3^{\mathsf T},\Psi_3^{\mathsf T},\Psi_3^{\mathsf T})\boldsymbol{u}=0$ and
$\diag(\Psi_3,\Psi_3,\Psi_3)\boldsymbol{v}=0$.
Define $\boldsymbol{x}:=\boldsymbol{u}^{\mathsf T}\Hres_X$ and $\boldsymbol{z}:=\boldsymbol{v}^{\mathsf T}\Hres_Z$ using the latent matrices $\Hres_X$ and $\Hres_Z$. Here $H_X$ and $H_Z$ are the active matrices, so
$\boldsymbol{u}\in\Ker\!\bigl(H_Z(\Hres_X)^{\mathsf T}\bigr)$ and
$\boldsymbol{v}\in\Ker\!\bigl(H_X(\Hres_Z)^{\mathsf T}\bigr)$.
With $L/2=6,J=3$ and $\Psi_r=0\ (r\neq 3)$,
\[
H_Z(\Hres_X)^{\mathsf T}=\diag(\Psi_3^{\mathsf T},\Psi_3^{\mathsf T},\Psi_3^{\mathsf T}),\qquad
H_X(\Hres_Z)^{\mathsf T}=\diag(\Psi_3,\Psi_3,\Psi_3).
\label{eq:block-diag-psi3}
\]
Thus $\boldsymbol{u}=(\boldsymbol{u}_3,\boldsymbol{u}_4,\boldsymbol{u}_5)$ and
$\boldsymbol{v}=(\boldsymbol{v}_3,\boldsymbol{v}_4,\boldsymbol{v}_5)$ with
$\boldsymbol{u}_r\in\Ker(\Psi_3^{\mathsf T})$ and $\boldsymbol{v}_r\in\Ker(\Psi_3)$ for $r=3,4,5$.
Commutativity makes the terms for $u=2,3,4,5$ vanish, so
\[
\Psi_3=F_0G_3+G_3F_0+F_1G_2+G_2F_1,
\]
and hence
\[
\Psi_3^{\mathsf T}=F_0^{\mathsf T}G_3^{\mathsf T}+G_3^{\mathsf T}F_0^{\mathsf T}
+F_1^{\mathsf T}G_2^{\mathsf T}+G_2^{\mathsf T}F_1^{\mathsf T}.
\]
In this instance, $\mathrm{rank}(\Psi_3)=576$, so $\dim \Ker(\Psi_3)=\dim \Ker(\Psi_3^{\mathsf T})=192$.
The four permutation terms in $\Psi_3$ have disjoint supports, so both row and column weights are $4$ and the kernels are generated by $192$ disjoint weight-$4$ codewords supported on the $4$-point blocks
\[
[t]:=\{t,\ t+192,\ t+384,\ t+576\}\subset\Z_{768},\qquad t\in\Z_{192}.
\]
The explicit weight-$4$ vectors constructed above lift to $\boldsymbol{x}=\boldsymbol{u}^{\mathsf T}\Hres_X$ and $\boldsymbol{z}=\boldsymbol{v}^{\mathsf T}\Hres_Z$ of weight $48$, and they are not in $C_X^{\perp}$ and $C_Z^{\perp}$ for this instance. This gives the upper bounds $d_X^{(\mathrm{lat})} \le 48$ and $d_Z^{(\mathrm{lat})} \le 48$.

Because each APM block preserves the $4$-point blocks, any latent vector $\boldsymbol{x}=\boldsymbol{u}^{\mathsf T}\Hres_X$
is block-constant within each column block. Compressing each column block to length $192$ gives $\bar{\boldsymbol{x}}$ with
$\wt(\boldsymbol{x})=4\,\wt(\bar{\boldsymbol{x}})$. In the compressed space, each generator has weight $12$ and overlaps across
row blocks are uniformly bounded, so any nonzero $\bar{\boldsymbol{x}}$ has $\wt(\bar{\boldsymbol{x}})\ge 12$.
Hence $\wt(\boldsymbol{x})\ge 48$, and the same argument applies to $Z$, so $d_X^{(\mathrm{lat})}\ge 48$ and
$d_Z^{(\mathrm{lat})}\ge 48$ are provable here. Together with the explicit weight-$48$ examples we conclude
$d_X^{(\mathrm{lat})}=d_Z^{(\mathrm{lat})}=48$.
This shows $d_{\min}\le 48$. 

To conclude $d_{\min}=48$, we must also bound logical operators outside the latent range; in other words,
we need a lower bound on the minimum weight of $\boldsymbol{x}\in C_Z\setminus C_X^\perp,\boldsymbol{x}\notin\Row(\Hres_X).$
At present we do not have certified lower bounds on these non-latent minima, so the rigorous statement here is only $d_{\min}\le 48$. In summary, we prove $d_X^{(\mathrm{lat})}=d_Z^{(\mathrm{lat})}=48$ for this instance, which yields the bound $d_{\min}\le 48$, while certified lower bounds for non-latent logical operators remain open.

\subsection{Frame error rate}
Figure~\ref{fig:fer-apm} reports the FER of the constructed code under BP with post-processing. For reference, we also include the hashing bound for the depolarizing channel in the plot. At $p=4\%$, the FER reaches $10^{-8}$. For each point we collected at least 50 error events; 95\% confidence intervals are plotted but may be visually indistinguishable because they are narrow. For FER above $10^{-7}$, all failures left hundreds of remaining errors, and we did not observe cases where the decoder returned low-weight logical errors (including higher-weight ones), which is common in small-distance codes. Around $10^{-8}$, an error-floor tendency begins to appear, and the dominant failures are stalls trapped by trapping sets of size on the order of tens.

\begin{figure}[t]
\centering
\includegraphics[width=0.85\linewidth]{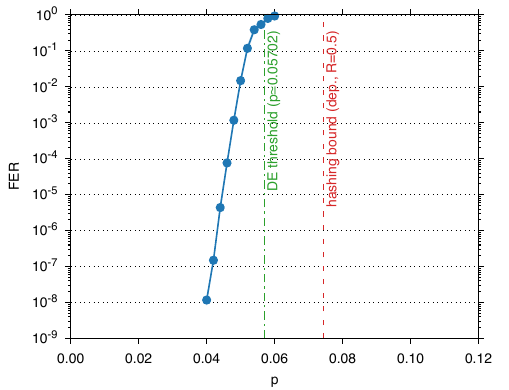}
\caption{FER curve of the constructed code girth-8, (3,12)-regular \([[9216,4612, \leq 48]]\), ($P=768,J=3,L=12$) with error bars (95\% confidence). For reference, we plot the depolarizing-channel hashing bound and the density-evolution benchmark threshold for a non-orthogonal random $(3,12)$-regular $H_X,H_Z$ pair ($p=0.05702$). Here the number of physical qubits is $n=PL$, the active matrix ranks are $\mathrm{rank}(H_X)=2302$ and $\mathrm{rank}(H_Z)=2302$, hence the number of logical qubits is $k=4612$.}
\label{fig:fer-apm}
\end{figure}

\subsection{Density evolution benchmark}
To benchmark BP performance, we analyze a non-orthogonal random $(3,12)$-regular pair of classical LDPC codes $H_X,H_Z$ drawn independently from the standard configuration-model ensemble. Although this pair is no longer a CSS code and has no direct physical meaning, it provides a useful classical benchmark for estimating noise from syndromes.
Under the cycle-free (infinite-length) approximation, we estimate the density-evolution fixed point using a population-dynamics (Monte Carlo) simulation: following the numerical viewpoint emphasized by MacKay~\cite{MacKay2003}, we track a large population of messages when analytic density tracking is intractable, iteratively updating it by randomly sampling tree-like neighborhoods with the regular degrees and applying the BP update rules for the depolarizing channel; see~\cite{RichardsonUrbanke2001,RichardsonUrbanke2008} for standard density-evolution background. This yields a depolarizing-channel BP threshold of $p\simeq 0.05702$, which we plot in Fig.~\ref{fig:fer-apm} as a benchmark reference. The proposed code approaches this DE threshold, suggesting it retains sufficient LDPC randomness and that the decoder effectively exploits it.

\section*{Data availability}
The data that support the findings of this study are available from the author upon reasonable request.

\section*{Code availability}
The construction algorithm is available at \url{https://github.com/kasaikenta/construct_apm_css_code}.
The decoding algorithm is available at \url{https://github.com/kasaikenta/joint_BP_plus_PP}.

\bibliographystyle{unsrt}
\bibliography{commute}
\end{document}